\DeclareMathOperator{\diag}{diag}
\DeclareMathOperator{\Diag}{Diag}
\DeclareMathOperator{\tr}{tr}
\DeclareMathOperator{\colspace}{colspace}
\newtheorem*{definition}{Definition}
\newtheorem*{theorem}{Theorem}
\begin{document}

\title{\LARGE \bf A Moving-target Cyber-Attack Detection Strategy for Large-scale Power Systems using Dynamic Clustering}

\author{Ana Jevti\'c and Marija Ili\'c%
	\thanks{Ana Jevti\'c and Marija Ili\'c are with the Laboratory for Information and Decision Systems (LIDS) at Massachusetts Institute of Technology, 
		Cambridge, MA 02139 USA
		{\tt\small ajevtic@mit.edu, ilic@mit.edu}}%
	\thanks{This material is based upon work supported by the Department of Energy under Award Number DE-OE0000779.}%
}

\maketitle

\begin{abstract}
	 In recent years, cyber-security of power systems has become a growing concern. To protect power systems from malicious adversaries, advanced defense strategies that exploit sophisticated detection algorithms are required. Motivated by this, in this paper we introduce an active defense method based on dynamic clustering. Our detection strategy uses a moving-target approach where information about the system's varying operating point is first used to cluster measurements according to their transfer function characteristics that change over time. Then, detection is carried out through series of similarity checks between measurements within the same cluster. The proposed method is effective in detecting cyber-attacks even when the attacker has extensive knowledge of the system parameters, model and detection policy at some point in time. The effectiveness of our proposed detection algorithm is demonstrated through a numerical example on the IEEE 24-bus power system.
\end{abstract}

\section{Introduction}
Cyber-security of power systems has become a very important topic of research in the recent years. Events like the 2015 cyber-attack on the Ukrainian power grid \cite{Ukraine2015}, and recently reported cyber incident that disrupted grid operation in the western US \cite{westUS} have led to the increase in awareness of the problem of securing critical infrastructure, such as the power grid, transportation systems, gas and water networks, etc. The large scale of power systems, diversity and complexity of its components, and, more recently, its exposure to the public via smart devices with Internet connectivity, are some of the factors that make cyber-security of power systems a challenging problem. The literature on this topic is constantly growing, but securing power systems against cyber-attacks is still an open problem \cite{RJCCPS},\cite{mo2012cyber},\cite{khurana2010smart}. 

In order to ensure secure operation of the power system, it is important to design and implement attack detection, which enables the system to mitigate any malicious intrusions. In general, this task is not trivial, since malicious attackers can be very resourceful, have detailed knowledge of the power system, and therefore launch highly effective and deceptive attacks. Static attack detectors do not consider system dynamics, but only the outputs of the system, which they check for consistency at every time step\cite{KosutStatic2010},\cite{Liu2009stat}. However, in \cite{pasqualetti2013attack} it is shown that attacks undetectable by static detectors can be constructed by the attacker. Stealthy attacks \cite{XieMoSin}, replay attacks \cite{mo2009secure}, and zero-dynamics attacks \cite{teixeira2012attack} are all examples of how an adversary can exploit knowledge of the system to launch attacks that can evade detection by the existing systems in power system control centers, namely State Estimation (SE) and static fault detection via Bad Data Detection (BDD). These attacks target SCADA (Supervisory Control and Data Acquisition), or more precisely the measurements of the system. Additionally, they don't affect the physical power system directly, but only through control and operating decisions based on wrong information. However, no dynamic detector can counter these attacks either, as they change the output of the system in a way that the output could also result from normal system behavior. References \cite{sridhar2014model}, \cite{Tabuada2016}, \cite{Fawzi2014} all provide dynamic attack detection algorithms. But in order to detect stealthy attacks, another class of detectors is introduced, namely active attack detectors. In contrast to passive detectors, active detectors perturb the system either through topology changes, or by injecting random signals into the network, in order to expose stealthy attacks.

One recent approach to active defense introduces an additional random signal, or "watermark", to the control signal as a form of authentication \cite{huang2018online},\cite{satchidanandan2017dynamic},\cite{mo2009secure}. In normal operation, this watermark should also be present in the measurement signal, so it's absence suggests that the system has been tampered with.  This is a good defense strategy, especially against replay attacks, but it is not effective in the case the attacker has extensive knowledge of the system model and the watermark. Another approach is to reveal the stealthy attacks by modifying the system's structure. Specifically, new measurements can be added incrementally to reveal stealthy zero-dynamics attacks \cite{teixeira2012revealing}. Even though this strategy effectively increases the robustness of the system, it is only successful for attacks that are constructed off-line, and once they are launched, the adversary can't gain new information about the changes in the structure of the system. Coding sensor outputs \cite{miao2014coding},\cite{rhouma2016coding} is an economical way of detecting stealthy FDI attacks when the attacker knows the system model without the coding scheme. However, similarly to the previous approaches, this strategy fails when the attacker has extensive knowledge of the system. On the other hand, in \cite{yuan2015security} the authors assume that the attacker without previous knowledge can first identify the system model by observing the control and measurement signals. Then they provide a controller design method which renders the system unidentifiable, but with a performance trade-off.

To address these issues, in this paper we make the following contribution. We propose a moving-target detection algorithm, that takes advantage of natural variations in power system dynamics, and varies its strategy accordingly in order to detect stealthy FDI attacks. The main difference in our approach, compared to the ones stated above, is that we don't assume that our detection strategy is always unknown to the adversary. Even if the adversary has complete knowledge of the current detection policy, system parameters and its structure at one point in time, the natural fluctuations of the system will render that knowledge unusable as the time goes by. The operating point of power systems, and therefore its dynamic behavior, varies in time since the generation has to continually change to satisfy the current demand.  With that in mind, our main goal is to quickly detect FDI attacks, before they have a chance to cause potentially devastating damage to the system. With our method, each incoming measurement is verified before being used as a feedback signal in any control process. We perform this verification in two stages. First, we cluster all the system measurements according to their dynamic response to natural load variations during normal operation. Next, we verify each incoming measurement by running a series of similarity checks with other measurements from the same cluster. 

\textit{Notation}: Let $\mathbb{R}$ and $\mathbb{N}$ denote the sets of real and natural numbers, respectively. Let $\mathcal{I}_k$ denote a set of integers, and $|\mathcal{I}_k|$ its cardinality. Then, $e^n_{\mathcal{I}_k} \in \mathbb{R}^{n\times |\mathcal{I}_k|}$ is a matrix composed of column vectors of the identity matrix ${I_n \in \mathbb{R}^{n \times n}}$ corresponding to the index set $\mathcal{I}_k$. Given a matrix $M \in \mathbb{R}^{n \times n}$, we denote its transpose by $M^T$, and its trace by $\tr\{M\}$. We use $\|M\|_F$ to denote the Frobenius norm of $M$, defined as ${\|M\|_F=\sqrt{\tr\{M^TM\}}}$. Furthermore, $M$ is said to be \textit{Hurwitz}  if every eigenvalue of $M$ has strictly negative real part. Matrix $M$ is said to be semistable iff the zero eigenvalues of $M$ are semisimple (geometric multiplicity of each eigenvalue coincides with its algebraic multiplicity), and all the other eigenvalues have a negative real part. Given a stable proper transfer function of a dynamical system $g(s)$, $\|g(s)\|_{\mathcal{H}_2}$ is the $\mathcal{H}_2$-norm of the system, defined as the energy of the system's impulse response. We use $\diag(v)$ to represent a diagonal matrix having a vector $v$ on its diagonal.

 The remainder of this paper is structured as follows. In Section~\ref{sec:problem} we formulate the problem of detecting FDI cyber-attacks. Then, in Section~\ref{sec:modeling} we introduce relevant power system component models and derive the dynamics of the interconnected power system. In Section~\ref{sec:detection}, we present the main contribution of this paper, the moving-target attack detection method. Section~\ref{sec:simulations} demonstrates the efficiency of our proposed method through a numerical example on the IEEE RTS 24-bus power system. Finally, in Section~\ref{sec:conclusion} we give some concluding remarks.

\section{Problem Formulation}
\label{sec:problem}
In this paper, we are concerned with detection of False Data Injection (FDI) attacks that target measurements, without affecting the physical behavior of the power system directly, but through incorrect control actions. Figure \ref{fig:blkdiag} depicts the block diagram representation of the system we consider. The power system block represents various components, such as generators, loads, transmission lines, etc. Two types of inputs enter the power system block: \textit{controlled inputs} $u(t)$ enforced by the actuators with the goal to stabilize and regulate the power grid to the desired operating point $x^0$, and \textit{uncontrolled inputs} or \textit{disturbances} $d(t)$ that represent changes in the environment that cannot be controlled. In our context, $d(t)$ is the power consumed by the loads, which directly influences control decisions and operation of the grid. More specifically, in a linearized system, $d(t)$ is the unpredictable variation of load around the forecasted value. It is important to note that this kind of disturbance does not act as noise in the system, since control feedback loop is designed to drive the actuators to balance these changes in load. Thus, we assume the physical system follows:
\begin{equation}
	\dot{x}(t)=Ax(t)+B\tilde{u}(t)+Gd(t)
	\label{eq:phys_dyn}
\end{equation}
where the states of system are denoted $x\in\mathbb{R}^n$, the disturbance is $d\in\mathbb{R}^m$, the control signal issued by the Control Center (CC) is $\tilde{u}\in\mathbb{R}^p$.
\begin{figure}[t]
	\begin{center}
		\includegraphics[width=0.48\textwidth]{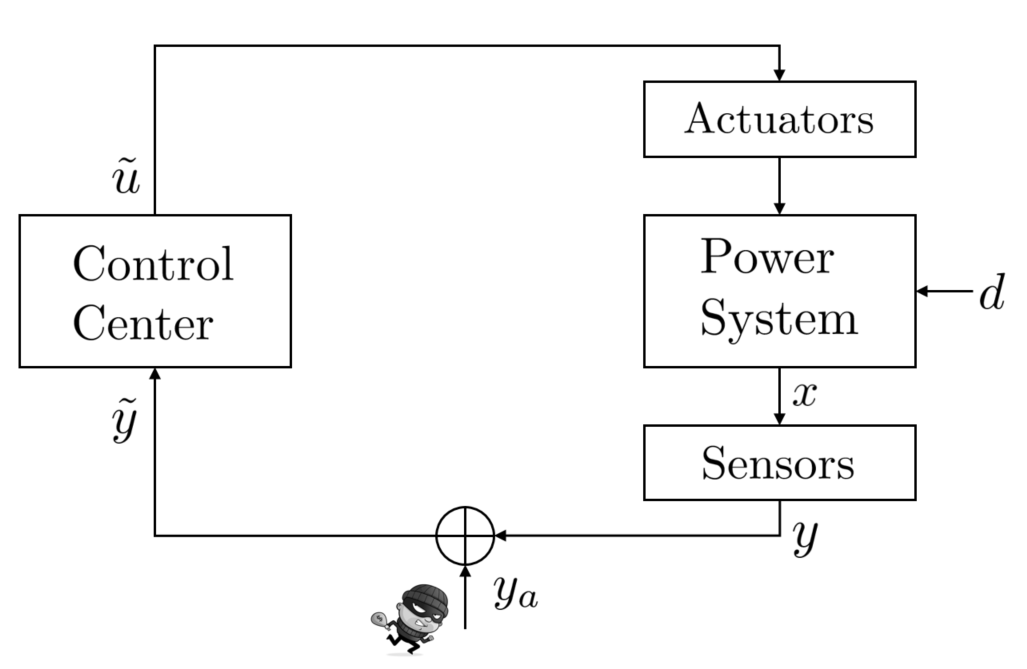}
		\caption{Block diagram of attacked power system in \eqref{eq:cl_att_lin_sys}. The attacker injects signal $y_a$ into measurements $y$ in order to manipulate the system.}
		\label{fig:blkdiag}
	\end{center}
\end{figure}
 A large network of field sensors is deployed to monitor the operation of the power system in \eqref{eq:phys_dyn}. In normal operation, the CC uses the measurements provided by these sensors to estimate the current demand, and then dispatch the generators accordingly. The actuators then adjust the power outputs of the generators to meet the new demand. However, a malicious attacker can negatively impact the system by manipulating the measurements, which is represented with the added signal $y_a(t)$ in Figure~\ref{fig:blkdiag}:
 \begin{align}
 \tilde{y}(t)&=y(t)+y_a(t)\\
 y(t)&=Cx(t)
 \label{eq:outputs}
 \end{align}
 where $\tilde{y} \in \mathbb{R}^l$ are measurements received by the CC. 
  Thus, a potentially manipulated measurement signal $\tilde{y}$ reaches the CC, which can then issue a potentially incorrect control signal $\tilde{u}$ to the power system actuators:
  \begin{equation}
  \tilde{u}(t)=K\tilde{y}=K[y(t)+y_a(t)]
  \end{equation}
 The control signal can then be decomposed as $\tilde{u}(t)=u(t)+u_a(t)$, where $u(t)=Ky(t)$ and $u_a(t)=Ky_a(t)$. Thus, the attacked system can be rewritten in closed-loop as:
\begin{equation}
\Sigma_a:\begin{dcases}
 \dot{x}(t)&=\mathcal{A}x(t)+\mathcal{B}y_a(t)+Gd(t) \\
\tilde{y}(t)&=Cx(t)+y_a(t)
\end{dcases}
\label{eq:cl_att_lin_sys}
\end{equation}
where $\mathcal{A}=A+BKC$ is the closed-loop system matrix, and $\mathcal{B}=BK$.

Note that both signals $d(t)$ and $y_a(t)$ are unknown. Now, the challenge we tackle in this paper emerges: distinguish whether a received measurement $\tilde{y}$ corresponds to physical changes in the system created by $d(t)$, or to malicious injection of $y_a(t)$ into the communication channel with intent to manipulate the control process. This is indeed a challenge when the attacks are stealthy, i.e. designed to satisfy the dynamics of the underlying system, so to be undetectable by traditional methods. This kind of disturbance, that does not represent noise, but physical changes in the system that are not predictable, can not be neglected when designing an FDI detection filter. We will demonstrate this in the next section. 
\subsection{Dynamic Detection Filter Based on Linear Observers}
In this section, we will explain why dynamic attack detectors based on linear observers may not be suitable for attack-detection in systems that are described by \eqref{eq:phys_dyn}, especially when the unknown disturbance $d(t)$ is not taken into account in the observer design process. Under normal conditions (no cyber-attack injected into the system, $\tilde{u}(t)=u(t)$), a linear observer is designed for the system in \eqref{eq:phys_dyn}, to compute the state estimate $\hat{x}(t)$ from the received measurements:
\begin{equation}
	\begin{aligned}
	\dot{\hat{x}}(t)&=\mathcal{A}\hat{x}(t)+L(\hat{y}(t)-y(t))\\
	&=(\mathcal{A}-LC)\hat{x}(t)+Ly(t)\\
	\hat{y}(t)&=C\hat{x}(t)
	\end{aligned}
	\label{eq:typ_observer}
\end{equation}
The error system can then be defined as $e(t):=\hat{x}(t)-x(t)$, and its dynamics:
\begin{equation}
	\dot{e}(t)=(\mathcal{A}-LC)e(t)-Gd(t)
\end{equation}
It is well known that, if $(\mathcal{A},C)$ is observable, $L$ can be chosen such that $(\mathcal{A}-LC)$ is Hurwitz, making the error system exponentially stable~\cite{ChenLinSysBook} in absence of the disturbance $d(t)$, i.e when $d(t)=0$. Finally, the residual $r_c(t):=\hat{y}(t)-y(t)$, where $\hat{y}$ is defined in \eqref{eq:typ_observer}, can be expressed as
\begin{equation*}
r_c(t)=Ce(t)
\end{equation*}
 Note that residual $r_c$ also converges to zero as $e(t) \rightarrow 0$ when $d(t)=0$. Therefore, a typical observer of this form will successfully estimate the state of the system under normal conditions only when the system does not experience a disturbance. Under certain assumptions on the disturbance, i.e. if it assumed to be white noise, an observer can be designed so that the residual converges to zero in expectation, i.e. $\mathbb{E}[r_c(t)]=0$. In that case, an accurate system estimate can be produced even in presence of a stochastic disturbance. In previous work, this is considered the case, and any unknown disturbance $d(t)$ that enters the system is neglected. Under this assumption, the observer error dynamics for a system under attack can be described by:
 \begin{equation*}
 \dot{e}(t)=(\mathcal{A}-LC)e(t)+(L-\mathcal{B})y_a(t)
 \end{equation*}
 and the residual by 
 $$r_c(t)=Ce(t)-y_a(t)$$
 Thus, one can use this residual to successfully detect cyber-attacks, as the error $e(t)$ and residual $r_c(t)$ will both converge to nonzero values only when a cyber-attack is present in the system, i.e. $y_a\neq 0$.
 On the other hand, in presence of the disturbance $d(t)\neq 0$, both the error and the residual will converge to nonzero values, $e\rightarrow e^*\neq 0$ and $r_c\rightarrow r_c^*\neq 0$, even when the system is not under attack. Specifically, $e^*(t)=(\mathcal{A}-LC)^{-1}Gd(t)$ and $r_c^*(t)=Ce^*(t)$. Thus, this observer will have a nonzero residual even in absence of cyber-attacks. If we also consider that the system is under attack, the error dynamics become:
 \begin{equation*}
 \begin{aligned}
 	\dot{e}(t)&=(\mathcal{A}-LC)e(t)+[(L-\mathcal{B})y_a(t)-Gd(t)]\\
 	r_c(t)&=Ce(t)-y_a(t)
 	\end{aligned}
 \end{equation*}
 Note that both the disturbance $d(t)$ and attack signal $y_a(t)$ enter the dynamics of the error system and residual. Therefore, using a detection filter based on such observers to detect FDI attacks on measurements  may be ineffective for several reasons. First, the effect of the disturbance $d(t)$ in the error dynamics may frequently trigger false alarms. Second, it will be ineffective against stealthy attacks, since the signals $d(t)$ and $y_a(t)$ may not be distinguishable. 
 Finally, it is clear that dynamic attack detectors, that do not account for unknown disturbances $d(t)$, cannot be used to detect stealthy cyber-attacks, as the residual $r_c$ is mutually dependent on both the disturbance and the attack signal, and it is nonzero even when the system is attack-free. In the rest of this paper, we will show how one can take advantage of the system's structure to expose stealthy attacks in a dynamic fashion by designing a new residual and a corresponding detection algorithm.
\subsection{Proposed Clustering-based Detection Filter}
In order to be able to discriminate between effects of the signals $d(t)$ and $y_a(t)$, we will first consider the system in (\ref{eq:cl_att_lin_sys}) in absence of attacks, i.e. for $y_a(t) \equiv 0$:
\begin{equation}
\Sigma:\begin{dcases}
\dot{x}(t)&=\mathcal{A}x(t)+Gd(t) \\
y(t)&=Cx(t)
\end{dcases}
\label{eq:cl_lin_sys}
\end{equation}
These equations represent the normal behavior of the system. To quantify the effects of disturbances on measurement signals, we define the clusters $\mathcal{I}_k$ as subsets of measurements that have a similar dynamic response to $d(t)$. More specifically, measurements $i,j$ belonging to the cluster $\mathcal{I}_k$ are approximately proportional $a_iy_i(t) \approx a_jy_j(t) \approx \dots \approx z^{(k)}(t)$, where $a_i, a_j, \dots$ are some constant coefficients. This relation of measurements within the same cluster can also be written as:
\begin{subequations}
\begin{align}
	&\hat{y}^{(k)}(t)=\begin{bmatrix}
	p_1^{(k)} \\  \vdots \\ p_{|\mathcal{I}_k|}^{(k)}
	\end{bmatrix} z^{(k)}(t) \label{eq:similar_meas}\\
	\text{such that} \quad &\|y^{(k)}(t)-\hat{y}^{(k)}(t)\|\leq \theta, \quad \theta \geq 0 \label{eq:similar_meas_cond}
	\end{align}
\end{subequations}
where $p_i=a_i^{-1}$ for all $i \in \mathcal{I}_k$, and $y^{(k)}$ is a subset of measurements $y$ belonging to cluster~$\mathcal{I}_k$, $y^{(k)}=(e^n_{\mathcal{I}_k})^Ty$, and all its elements are approximately proportional to a single scalar variable $z^{(k)}$. Parameter $\theta$ is a design parameter, which will be discussed in later sections.
Thus, we use terms \textit{cluster} and \textit{clustering} in the sense of model order reduction method in this work. We aim to estimate the full system state not only based on the received measurements and full system model, but also using the reduced order model which contains and encodes more detailed information (e.g. as in \eqref{eq:similar_meas}) that can be used to detect stealthy attacks. Specifically, we use the knowledge of the fact that incoming measurements $\tilde{y}^{(k)}(t)=y^{(k)}(t)+y_a^{(k)}(t)$ belonging to cluster $k$ also have the property in \eqref{eq:similar_meas} only if $y_a^{(k)}\equiv 0$.
Therefore, we define the set of residuals $r_{i,j}(t)$ that exploit this property as:
\begin{equation}
	r_{i,j}(t)= ||p_j\tilde{y}_i(t)-p_i\tilde{y}_j(t)||, \quad \forall i,j \in \mathcal{I}_k
	\label{eq:cbdf_residual}
\end{equation}
The residuals $r_{i,j} $ defined above will have a value larger than some threshold $\varepsilon$ only in presence of FDI attacks. The appropriate choice for the threshold $\varepsilon$ will be discussed in later sections.
In the following sections, we provide a method for finding the boundaries of clusters used for attack detection and design a moving-target FDI attack detection algorithm.

\section{Cyber-attack Detection Methodology}
\label{sec:detection}
In order to address the problem of detecting FDI attacks in a large power system network, we propose a methodology that consists of two steps. In the first step, we group the state measurements $y(t)$ into clusters, based on similarity of their dynamic response to input $d(t)$ in normal operating conditions. Then, in step two, we identify the attacked measurements through a series of consistency checks within each cluster. This procedure can be written compactly as:
\begin{itemize}
	\item group the indices $i$ of all measurements $y_i(t)$ with similar dynamic responses into index sets ${\mathcal{I}_k, \; k\in \{1,\dots K\}}$ 
	\item for a given threshold $\varepsilon >0$, check all incoming measurements $\tilde{y}_i(t)$ for consistency. If there exist $i,j\in \mathcal{I}_k$ such that following condition is satisfied 
	\begin{equation*}
		r_{i,j}(t)= \|p_j\tilde{y}_i(t)-p_i\tilde{y}_j(t)\| \geq \varepsilon
	\end{equation*}
	 flag measurements $\tilde{y}_i$ and $\tilde{y}_j$ as attacked measurements.
\end{itemize}
In the following subsections, we will introduce each of these steps in detail.
\subsection{Measurement clustering}
\label{sec:MOR}
In this section, we will introduce an aggregation method, inspired by \cite{ISHIZAKIpos_dir},\cite{Ishizaki20115019}, for clustering measurements according to their dynamic response to the external input $d(t)$. This method will form the basis of our methodology for detecting cyber-attacks in power systems. We begin by introducing the definition of a cluster.
\begin{definition}
	Let $\mathbb{L}=\{1,\dots, l\}$ be the set of measurement indices, and $\mathbb{K}=\{1,\dots,K\}$ the set of cluster indices. Then, clusters $\mathcal{I}_k$, $k\in \mathbb{K}$, are defined as disjoint subsets of $\mathbb{L}$, that cover all the elements in $\mathbb{L}$, i.e. $\bigcup_{k\in \mathbb{K}}\mathcal{I}_k=\mathbb{L}$. 
\end{definition}
Further, clustering coefficients corresponding to $\mathcal{I}_k$ are defined as $p_k\in \mathbb{R}^{1\times |\mathcal{I}_k|}$, such that $\|p_k\|=1$, and
\begin{equation}
p_k=\frac{(e^n_{\mathcal{I}_k})^T v_{max}}{\|(e^n_{\mathcal{I}_k})^T v_{max}\|}, \quad k\in \mathbb{K}
\label{eq:coefficients}
\end{equation}
where $v_{max}$ is the left eigenvector corresponding to the largest eigenvalue of $A$.
With this definition in mind, we aim to partition the set $\mathbb{L}$ into clusters $\mathcal{I}_k$ such that
\begin{equation}
p_jg_i(s)=p_ig_j(s), \quad \forall i,j \in \mathcal{I}_k
\end{equation}
where $g_i$ is the $i$-th element of $g(s)=C(sI_n-\mathcal{A})^{-1}G$, an input-output transfer matrix of the system in \eqref{eq:cl_lin_sys}. The measurements $i$ and $j$ belonging to the same cluster $k$ have a proportional, or in some cases identical, response to the input $d(t)$. In that sense, the proportionality of transfer functions $g_i$ and $g_j$ can also be expressed as proportionality to some scalar function $\bar{g}$ corresponding to cluster $k$. Therefore, we can define a condition for cluster formation in a compact way as follows.
\begin{definition}
	A set of measurements $\{y_i\}$ should form a cluster $\mathcal{I}_k$ if there exists a scalar function $\bar{g}(s)$ such that:
	\begin{equation}
	(e_{\mathcal{I}_k}^n)^Tg(s)=p_k^T\bar{g}(s)
	\label{eq:equal_gs}
	\end{equation} 
	where $p_k$ is defined as in \eqref{eq:coefficients}.
\end{definition}
This definition provides intuition on the meaning of clustering in our application, but is not practical for designing a procedure that would form such clusters, as it is not practical to perform similarity checks on functions. To get around this problem, we will derive a matrix-based condition equivalent to \eqref{eq:equal_gs} that will be more straightforward to check. For that reason, we will derive an equivalent condition for cluster formation that is more practical to check, based on this definition of similarity and the notion of reachability. To that end, we first derive reachability Gramian of a semistable system \eqref{eq:cl_lin_sys}. The reachability Gramian is defined as
\begin{equation}
	W_c=\int_{0}^{\infty}e^{\mathcal{A}t}GG^Te^{\mathcal{A}^Tt}dt
	\label{eq:ctrbGram}
\end{equation}
When $\mathcal{A}$ is Hurwitz, the above integral converges, and $W_c$ can also be found as a solution of the Lyapunov equation $\mathcal{A}W_c+W_c\mathcal{A}^T+GG^T=0$
 
 However, in power systems, the system matrix $\mathcal{A}$ has an inherent structural singularity, as a direct consequence of power conservation law. Due to semistability of the system matrix $\mathcal{A}$, the integral in \eqref{eq:ctrbGram} may not converge. To compute the reachability Gramian of a semistable system, we first consider the decomposition of $\mathcal{A}$ where $0=\lambda_1 > \lambda_2 \geq \dots \geq \lambda_n$
 \begin{equation*}
	\mathcal{A}= U\Lambda U^{-1}= [u_{max} \quad \bar{U}]\begin{bmatrix} 0 &  \\ & \bar{\Lambda}	\end{bmatrix} \begin{bmatrix} v_{max}^T\\ \\\bar{V}^T
	\end{bmatrix}
 \end{equation*} 
 where $u_{max}$ and $v_{max}$ are the right and left eigenvectors corresponding to the largest eigenvalue $\lambda_1=0$, and $\bar{\Lambda}$ is diagonal and Hurwitz. Let $\bar{\mathcal{A}}=\overline{V}^T\mathcal{A}\overline{U}$ and $\overline{G}=\overline{V}^TG$, defined as the stable subspace of $\Sigma$. Then, the reachability Gramian of the stable subspace is the solution of $\bar{\mathcal{A}}\;\overline{W}_c+\overline{W}_c\bar{\mathcal{A}}^T+\overline{G}\overline{G}^T=0$. Substituting $\bar{\mathcal{A}}$ and $\overline{G}$ into \eqref{eq:ctrbGram} yields
 \begin{align*}
 	\overline{W}_c&=\int_{0}^{\infty}e^{\bar{\mathcal{A}}t}\overline{G}\overline{G}^Te^{\bar{\mathcal{A}}^Tt}dt \\
 	&=\int_{0}^{\infty}\overline{V}^Te^{\mathcal{A}t}\bar{U}\overline{V}^TGG^T\overline{V}\bar{U}^Te^{\mathcal{A}^Tt}\overline{V}dt \\
 	&=\overline{V}^TW_c\overline{V}
 \end{align*}
 and 
 \begin{equation}
	W_c=\overline{V}\overline{W}_c\overline{V}^T
	\label{eq:semisGram}
 \end{equation} is the reachability Gramian of the semistable system $\Sigma$ and contains information on the degree of reachability of states with respect to the input $d(t)$. In the following theorem we show that the condition in \eqref{eq:equal_gs} is equivalent to linear dependence of rows of a matrix $\Phi$, where linearly dependent rows are expressed as proportional to some constant vector $\bar{\phi}$.
 \begin{theorem}
	Consider the reachability Gramian $W_c$ in \eqref{eq:semisGram} of the semistable system $\Sigma$ in \eqref{eq:cl_lin_sys}. Furthermore, let the Cholesky factorization of $W_c$ be given by ${W_c=W_LW_L^T}$, and ${\Phi=CW_L}$. Then, the condition in \eqref{eq:equal_gs} is equivalent to 
	\begin{equation}
		(e_{\mathcal{I}_k}^n)^T\Phi=p_k^T\bar{\phi}	
		\label{eq:equal_ws}
	\end{equation}
	where $\bar{\phi}\in\mathbb{R}^{1\times n}$ is a constant vector.
\end{theorem}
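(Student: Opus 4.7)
The plan is to prove the equivalence algebraically by exploiting that, for the stable subsystem underlying $\Sigma$, the reachable subspace coincides with the range of $W_c$, and hence with the range of its Cholesky factor $W_L$. The driving chain of equivalences, valid for any row vector $r\in\mathbb{R}^{1\times n}$, is
\[
r\,e^{\mathcal{A}t}G=0\ \forall\, t\geq 0 \iff rW_c=0 \iff rW_L=0.
\]
Applied to $r=p_j c_i^T-p_i c_j^T$, where $c_i^T$ is the $i$-th row of $C$ and $p_i$ the entry of $p_k$ corresponding to index $i\in\mathcal{I}_k$, this matches the pairwise proportionality forms of conditions \eqref{eq:equal_gs} and \eqref{eq:equal_ws}.

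First, I would translate \eqref{eq:equal_gs} into the time domain by inverse Laplace transform: read row-by-row, $(e_{\mathcal{I}_k}^n)^T g(s)=p_k^T\bar g(s)$ says $p_j g_i(s)=p_i g_j(s)$ for all $i,j\in\mathcal{I}_k$, which in the time domain becomes $p_j c_i^T e^{\mathcal{A}t}G=p_i c_j^T e^{\mathcal{A}t}G$ for all $t\geq 0$. Second, I would justify the two links of the chain. For the first link, right-multiplying the time-domain identity by $G^T e^{\mathcal{A}^T t}$ and integrating on $[0,\infty)$ gives $rW_c=0$; conversely, $rW_c r^T=\int_0^\infty \|r e^{\mathcal{A}t}G\|^2\,dt=0$ forces $re^{\mathcal{A}t}G=0$ pointwise by continuity. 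For the second link, $rW_c=rW_LW_L^T=0$ yields $\|rW_L\|^2=0$, so $rW_L=0$, and the converse is immediate. Finally, $rW_L=0$ with the above choice of $r$ reads $p_j\phi_i=p_i\phi_j$, and collecting over $i,j\in\mathcal{I}_k$ recovers the matrix form \eqref{eq:equal_ws} with $\bar\phi:=\phi_i/p_i$ defined using any $i\in\mathcal{I}_k$ with $p_i\neq 0$ (guaranteed to exist because $\|p_k\|=1$).

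The main obstacle I anticipate is the rigorous reduction to the stable subsystem, since for the semistable $\mathcal{A}$ the integral in \eqref{eq:ctrbGram} does not converge and the bases $\overline{U},\overline{V}$ exhibited in the excerpt are biorthogonal rather than orthonormal. I would address this by carrying out the chain above on the stable subsystem $(\overline{\Lambda},\overline{V}^T G,C\overline{U})$ rather than on $(\mathcal{A},G,C)$ directly, using that the zero mode is decoupled from the reachable dynamics driven by $d$ (a consequence of the power-conservation structure that generates the zero eigenvalue in the first place), and then lifting the conclusion back to the original coordinates via the biorthogonality $\overline{V}^T\overline{U}=I$ and the definition $W_c=\overline{V}\overline{W}_c\overline{V}^T$ from \eqref{eq:semisGram}. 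Once this reduction is in place, the three short steps above yield the equivalence as stated.
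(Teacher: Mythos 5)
Your proposal is correct in its essential content, but it takes a genuinely different route from the paper. The paper's proof reduces both \eqref{eq:equal_gs} and \eqref{eq:equal_ws} to equalities of norms, $p_j\|g_i(s)\|_{\mathcal{H}_2}=p_i\|g_j(s)\|_{\mathcal{H}_2}$ and $p_j\|\Phi_i\|=p_i\|\Phi_j\|$, and then identifies the two sides through the trace identity $\|g(s)\|_{\mathcal{H}_2}^2=\tr\{CW_cC^T\}=\|CW_L\|_F^2$, i.e.\ it matches the $\mathcal{H}_2$-energy of each output row with the Euclidean norm of the corresponding row of $\Phi$. You instead apply the Gramian as a quadratic form to the pairwise difference row $r=p_jc_i^T-p_ic_j^T$ and run the kernel chain $re^{\mathcal{A}t}G\equiv 0 \iff rW_cr^T=0 \iff rW_L=0$. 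Your route is in fact tighter where it matters: equality of norms is only a necessary consequence of proportionality, so the paper's reduction delivers the equivalence cleanly in one direction only, whereas your quadratic-form argument (equivalently, preservation of inner products, not merely norms, under $g_i\mapsto\Phi_i$) gives the stated two-sided equivalence directly, and you also make explicit the recovery of $\bar\phi$ from the pairwise conditions using $\|p_k\|=1$, which the paper leaves implicit; what the paper's shorter argument buys is the interpretation of $\|\Phi_i\|$ as the impulse-response energy of output $i$. One caveat applies to both proofs and remains only a sketch in yours: because $\mathcal{A}$ is semistable, $W_c$ in \eqref{eq:semisGram} encodes only the stable subsystem, so identifying $rW_cr^T$ with the energy of $re^{\mathcal{A}t}G$ (and hence matching the full transfer-function condition \eqref{eq:equal_gs}) requires the zero-mode contribution $Cu_{max}v_{max}^TG$ to cancel from the differences; you assert this via the power-conservation structure without proof, and your lifting through the biorthogonal pair $(\overline{U},\overline{V})$ would also have to confront the fact that \eqref{eq:semisGram} places $\overline{V}$ on both sides while the stable part of the impulse response involves $\overline{U}$. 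Since the paper's own proof silently treats $W_c$ as the convergent integral and glosses over exactly these points, this does not put your argument at a disadvantage relative to the published one, but a complete write-up should address the reduction rather than appeal to structure.
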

\begin{proof}
	In order for \eqref{eq:equal_gs} to hold, for each $i,j \in \mathcal{I}_k$ it must hold that
	$$p_j\|g_i(s)\|_{\mathcal{H}_2}=p_i\|g_j(s)\|_{\mathcal{H}_2}.$$
	Similarly, \eqref{eq:equal_ws} is equivalent to
	$$p_j\|\Phi_i\|=p_i\|\Phi_j\|$$
	where $\Phi_i$ is the $i$th row of the matrix $\Phi$.
	The $\mathcal{H}_2$-norm of a linear system can be computed as the $\mathcal{L}_2$-norm of its impulse response $h(t)$:
	$$\|g(s)\|_{\mathcal{H}_2}^2=\|h(t)\|_{2}^2=\tr\left\{C\int_{0}^{\infty}e^{\mathcal{A}t}GG^Te^{\mathcal{A}^Tt}dtC^T\right\}$$
	Plugging in \eqref{eq:semisGram}, we have
	$$\|h(t)\|_{2}^2=\tr\left\{C\overline{V}\left[\int_{0}^{\infty}e^{\bar{\mathcal{A}}t}\overline{G}\overline{G}^Te^{\bar{\mathcal{A}}^Tt}dt\right]\overline{V}^TC^T\right\}$$ 
	For $\|h(t)\|_{2}^2$ to be finite, the integral above must be finite. Since $\bar{\mathcal{A}}$ and $\overline{G}$ are the stable subspace of $\Sigma$, we have
	$$\lim\limits_{t\rightarrow\infty}e^{\bar{\mathcal{A}}t}=0$$
	and  $\|h(t)\|_{2}^2$ is finite and equal to:
	\begin{align}
	\|g(s)\|_{\mathcal{H}_2}^2&=\|h(t)\|_{2}^2=\tr\{CW_cC^T\}=\tr\{CW_LW_L^TC^T\}= \nonumber \\ 
	&=\|CW_L\|_F=\|\Phi\|_F
	\end{align}	
	where $\|\cdot\|_F$ is a vector norm applied to each row of $\Phi$.
	Hence, \eqref{eq:equal_gs} is equivalent to \eqref{eq:equal_ws}.
\end{proof}

However, in real systems, the identity in \eqref{eq:equal_gs} is almost never the case. Therefore, we will relax the strict equality, and require 
\begin{equation}
	\|p_jg_i(s)-p_ig_j(s)\|_{\mathcal{H}_2} \leq \varepsilon, \quad \forall i,j \in \mathcal{I}_k
	\label{eq:clustering_condition}
\end{equation}
to hold for each cluster, or equivalently, we can check for linear dependence between rows of matrix $\Phi$:
\begin{equation}
\|p_j\Phi_i-p_i\Phi_j\| \leq \theta \quad \forall i,j \in \mathcal{I}_k
\label{eq:clust_cond}
\end{equation}
where $\theta >0$ and $\Phi_i$ is the $i$-th row of $\Phi$. Here, $\theta$ is a parameter that allows us to control the coarseness of clustering. In other words, it allows us to find outputs that have a "similar", instead of equal, response, which relaxes the condition \eqref{eq:equal_gs}. The smaller $\theta$ is, more accurate the clustering will be, but the clusters may contain very few measurements, which is not desirable for attack detection purposes. On the other hand, if $\theta$ is too large, the detection threshold in \eqref{eq:cbdf_residual} will have to be large as well to avoid false alarm, and stealthy attacks may not be detected. Since this trade-off in choice of $\theta$ is obvious, ideally, $\theta$ should be chosen as a smallest value for which each cluster contains at least two measurements.

Finally, we can introduce the measurement clustering algorithm defined above. Assume $k$ clusters have already been formed. First, we choose an index $i$ that hasn't already been assigned to any cluster, and add it to cluster $k+1$. Then, we choose another index $j$ that is not yet assigned to a cluster, and check condition \eqref{eq:clust_cond} for $i$ and $j$. If the condition is satisfied, we add $j$ to cluster $k+1$. We repeat this process until all measurements are assigned to a cluster. This procedure is summarized in the algorithm below.
 
\begin{algorithm}[H]
\label{clustering_alg}
  \begin{algorithmic}
  		\caption{Clustering algorithm}
  		\State \textbf{Initialize} cluster index $k=0$, and cluster set $\mathbb{K}=\emptyset$
  		\Repeat 
  			\State \textbf{Choose} measurement index $i \in \mathbb{L}$ that hasn't been 
  			\State \quad assigned to a cluster yet, and add it to cluster $\mathcal{I}_{k+1}$
  			\State \textbf{Set} $k \leftarrow k+1$, $\mathbb{K}\leftarrow \{\mathbb{K},k\}$
  			\State \textbf{Find} all $j \in \mathbb{L}$ that haven't been assigned to a cluster
  			\State \quad  yet and that satisfy \eqref{eq:clust_cond} and add them 
  			\State \quad to $\mathcal{I}_{k+1} \leftarrow \{\mathcal{I}_{k+1},j\}$
  		\Until {all measurements are assigned to a cluster, i.e. $\bigcup_{k\in \mathbb{K}}\mathcal{I}_k=\mathbb{L}$ }
  \end{algorithmic}
\end{algorithm}
 
 Next, we introduce the moving-target FDI attack detection method, based on this clustering procedure.
 
\subsection{Detection algorithm}
In the previous section we have derived the algorithm for cluster formation with regard to the linearized system in \eqref{eq:cl_lin_sys}. In this section, we will introduce our cyber-attack detection algorithm that leverages measurement clusters found using Algorithm 1. Two properties of this clustering method are key in our cyber-attack detection filter design. Firstly, we know that, once clustering is performed on the system in normal operating conditions, the outputs within the clusters will be approximately proportional to each other at all times $t$. That enables us to perform quick consistency checks to ensure the safety and reliability of the system. Secondly, the result of clustering will change over time as operating conditions change. On one hand, that means that the clustering will have to be performed periodically, but on the other hand, it gives our detection filter a very desirable property of "moving-target" behavior. That means our detection strategy will change over time, posing an additional difficulty to the attacker. In this section, we will derive a detection algorithm based on these two crucial properties.

In the analysis in Section~\ref{sec:MOR}, we have shown that clusters can be formed such that measurements $i,j$ within the cluster $\mathcal{I}_k$ are approximately proportional, i.e. $a_iy_i(t) \approx a_jy_j(t) \approx \dots \approx z^{(k)}(t)$, where $p_i=a_i^{-1}$ for all $i \in \mathcal{I}_k$. Then, we derived a matrix-based condition to find such clusters. Next, we will show that the clustering-based approximation $\hat{y}$ as in \eqref{eq:similar_meas} is a good approximation of $y$. In other words, we can approximate original system outputs ${y=Cx}$ by ${\hat{y}=\Pi^Tz}$, where $z=\Pi y=\Pi Cx$. The clustering matrix $\Pi \in \mathbb{R}^{K\times n}$ is defined as:
\begin{equation}
\Pi:=\Diag\{p_1,p_2,\dots, p_K\}E \in \mathbb{R}^{K \times n}
\label{eq:agg_matrix}
\end{equation}
where $p_i$ is defined in \eqref{eq:coefficients}, and $E$ is a permutation matrix defined as $E=[e^n_{\mathcal{I}_1}, \dots,e^n_{\mathcal{I}_K}]^T$. The input-output transfer matrix associated with $\hat{y}$ can then be defined as ${\hat{g}(s)=\Pi^T\Pi C\left(sI_n-\mathcal{A}\right)^{-1}G}$. The following theorem will show that $\hat{y}$ is a good approximation of $y$, and that it can be used in our cyber-attack detection methodology.
\begin{theorem}
	Consider a semistable linear system in \eqref{eq:cl_lin_sys}. Consider also a clustering-based approximation $\hat{y}$ obtained using the aggregation matrix $\Pi$. Then, the error system of the approximation $g_e(s)=g(s)-\hat{g}(s)$ is asymptotically stable.
\end{theorem}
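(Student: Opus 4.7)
The goal is to show that $g_e(s) = g(s) - \hat{g}(s) = (I_n - \Pi^T\Pi)\,C(sI_n - \mathcal{A})^{-1}G$ has all its poles strictly in the open left half-plane. Since $\mathcal{A}$ is semistable with a simple eigenvalue at $0$ and all remaining eigenvalues strictly stable, the only potential obstruction to asymptotic stability of $g_e(s)$ is a pole at $s = 0$, so it suffices to show that the residue of $g_e(s)$ at $s=0$ vanishes.

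My plan is to exploit the modal decomposition of $\mathcal{A}$ introduced in Section~\ref{sec:MOR}. Splitting the resolvent into its singular part and its stable regular part gives
\begin{equation*}
(sI_n - \mathcal{A})^{-1} \;=\; \frac{u_{\max}v_{\max}^T}{s} \;+\; \bar{U}(sI_{n-1} - \bar{\Lambda})^{-1}\bar{V}^T,
\end{equation*}
where the second summand contributes only stable poles because $\bar{\Lambda}$ is Hurwitz. Substituting into $g_e(s)$ yields
\begin{equation*}
g_e(s) \;=\; \frac{(I_n - \Pi^T\Pi)\,C u_{\max}\,v_{\max}^T G}{s} \;+\; (I_n - \Pi^T\Pi)\,C\bar{U}(sI_{n-1} - \bar{\Lambda})^{-1}\bar{V}^T G,
\end{equation*}
so asymptotic stability reduces to the single residue condition
\begin{equation*}
(I_n - \Pi^T\Pi)\,C u_{\max} \;=\; 0, \qquad\text{equivalently,}\qquad C u_{\max} \in \mathrm{Range}(\Pi^T).
\end{equation*}

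The main obstacle is establishing this cancellation from the structure of $\Pi$ in \eqref{eq:agg_matrix}. Because the clusters $\mathcal{I}_k$ are disjoint and exhaustive and each row $p_k$ has unit norm with support confined to $\mathcal{I}_k$, the rows of $\Pi$ are orthonormal; hence $\Pi\Pi^T = I_K$ and $\Pi^T\Pi$ is an orthogonal projector onto $\mathrm{Range}(\Pi^T)$. Applying the clustering rule \eqref{eq:equal_gs} at $s=0$ forces the entries of $Cu_{\max}$ restricted to each $\mathcal{I}_k$ to be mutually proportional, and by the choice of clustering coefficients in \eqref{eq:coefficients} that proportionality is precisely along $p_k^T$. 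Thus $Cu_{\max}$ lies in $\mathrm{Range}(\Pi^T)$, the projector fixes it, and the residue at $s=0$ cancels.

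In the relaxed setting \eqref{eq:clustering_condition}, the equalities above become inequalities bounded by $\theta$, so the residue at $s=0$ is bounded by a quantity proportional to $\theta$; taking $\theta$ small enough (or exactly zero in the limit) preserves asymptotic stability by continuity of the spectrum of a proper rational matrix in its coefficients. This observation simultaneously justifies the approximation $\hat{y}\approx y$ and explains why the parameter $\theta$ in \eqref{eq:clust_cond} controls the accuracy of the clustering-based reduction.
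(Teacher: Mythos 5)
Your reduction of the theorem to a residue condition at $s=0$ is sound, and its skeleton matches the paper's: you use that the rows of $\Pi$ are orthonormal (disjoint supports, unit-norm $p_k$), so $\Pi\Pi^T=I_K$ and $\Pi^T\Pi$ is an orthogonal projector, split the resolvent into the marginal part $u_{max}v_{max}^T/s$ plus a Hurwitz part, and conclude that everything hinges on the projector annihilating the factor attached to the pole at the origin. The paper does the same thing phrased through the orthogonal complement $\bar\Pi$ (writing $g_e=\bar\Pi^T\bar\Pi g$ and exhibiting a pole--zero cancellation of the marginal mode).

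The genuine gap is where you get the cancellation from. You derive $(I_n-\Pi^T\Pi)Cu_{max}=0$ from the exact clustering identity \eqref{eq:equal_gs}, but the paper explicitly notes that this identity ``is almost never the case'': the clusters actually used are produced by Algorithm 1 under the relaxed test \eqref{eq:clust_cond} with a fixed tolerance $\theta>0$, so the exact proportionality you extract from \eqref{eq:equal_gs} is not available. Your fallback for the relaxed case fails: if the residue at $s=0$ is merely $O(\theta)$ but nonzero, then $g_e(s)$ still has a pole at the origin and is not asymptotically stable --- continuity of the spectrum cannot remove a marginal pole for a fixed $\theta>0$, it only makes its coefficient small. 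The idea you are missing, and the one the paper's proof rests on, is that the cancellation is exact \emph{by construction}, independently of how coarsely \eqref{eq:clust_cond} is satisfied: the clustering coefficients \eqref{eq:coefficients} are defined as the normalized restrictions of the eigenvector $v_{max}$ associated with the zero eigenvalue to each cluster, so $v_{max}\in\colspace(\Pi^T)$, i.e. $\Pi^T\Pi v_{max}=v_{max}$ and $\bar\Pi v_{max}=0$, holds exactly for every partition $\{\mathcal{I}_k\}$ and every $\theta$; this is what kills the marginal mode in $g_e=\bar\Pi^T\bar\Pi g$. (You are right, incidentally, that the residue of $g$ at $s=0$ is $Cu_{max}v_{max}^TG$, so the quantity to be annihilated is the one you wrote; the paper's argument instead pins down the eigenvector used in \eqref{eq:coefficients}, the two being identified in its state-measurement setting.) Replace the appeal to \eqref{eq:equal_gs} and the continuity patch by this construction-based fact and your argument closes for the algorithm as actually implemented.
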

\begin{proof}
	By definitions given in \eqref{eq:coefficients} and \eqref{eq:agg_matrix}, $\Pi$ is a unitary matrix, i.e. $\Pi\Pi^T=I_K$, and $\Pi^T\Pi$ is an orthogonal projection onto $\colspace(\Pi^T)$. Note also that, by definition, $v_{max} \in \colspace(\Pi^T)$. 
	We then define $\bar{\Pi}$ as an orthogonal complement of $\Pi$, such that $[\Pi^T \; \bar{\Pi}^T]^T$ is unitary, and $I-\Pi^T\Pi=\bar{\Pi}^T\bar{\Pi}$.
	Consider now the error system $g_e=g-\hat{g}$ of the approximation:
	\begin{equation}
	\begin{aligned}
	g_e(s)&=C(sI-\mathcal{A})^{-1}G-\Pi^T\Pi C(sI-\mathcal{A})^{-1}G= \\
	&=(I_n-\Pi^T\Pi)C(sI-\mathcal{A})^{-1}G=\bar{\Pi}^T\bar{\Pi}g(s)
	\end{aligned}
	\end{equation}
	Then $\Pi^T\Pi v_{max}=v_{max}$, or equivalently 
	$$\bar{\Pi}v_{max}=0.$$
	This implies that there is pole-zero cancellation in $\bar{\Pi}g(s)$ associated with the zero eigenvalue. Therefore, all poles of $\bar{\Pi}g(s)$ have negative real parts, and the error system $g_e$ is asymptotically stable.
\end{proof}

 Now, we introduce the moving-target detection algorithm based on dynamic clustering of system outputs. Firstly, the moving-target nature of our proposed method stems from the natural fluctuations occurring in power systems. As the underlying power system model is nonlinear, the system matrices $\mathcal{A}$ and $G$ are only valid around a certain operating point $x^0$, and we will denote them with $\mathcal{A}(x^0)$ and $G(x^0)$. Unit Commitment (UC) and Economic Dispatch (ED) are core processes in power system's operation, whose main function is to balance the current demand with sufficient generation in regular intervals (usually every 1 hour). Thus, as loading or topology of the system change, so will the operating point of the system  $x^0$, as well as the boundaries of the clusters. 

As a result, cluster boundaries have to be recomputed at every instance of ED, for the current operating conditions $x^0$. At first, this may sound inconvenient for the system operator, but the additional computations that are needed are negligible compared to the significant increase in difficulty of manipulating the system using the moving-target strategy. In other words, even if the attacker had absolute knowledge of the system at one time, our detection strategy will keep changing, and that knowledge will eventually become meaningless. 

Once the new operating point is received from the ED process, the linearized system matrices and clusters need to be recomputed. Then, at every time step of the control processes, the incoming measurements must be verified using residuals of the clustering-based detection filter given in \eqref{eq:cbdf_residual} before the control action is computed and performed. Verification is performed within each cluster using the following procedure.
\begin{itemize}
	\item For each cluster $\mathcal{I}_k, \; k \in \mathbb{K}$, run a sequence of similarity checks for all $i,j \in \mathcal{I}_k$:
	\begin{equation}
	r_{i,j}(t)= ||p_j\tilde{y}_i(t)-p_i\tilde{y}_j(t)||, \quad \forall i,j \in \mathcal{I}_k
	\label{eq:sim_check}
	\end{equation}
	\item All measurements that satisfy $r_{i,j}(t) \geq \epsilon$ are classified as attacked measurements.
\end{itemize}
 Now, we can state the moving-target FDI attack detection algorithm, based on dynamic clustering.
 
 \begin{algorithm}[H]
 \begin{algorithmic}
 	\caption{Clustering-based FDI attack Detection Method}
 	\Repeat
 	\State \textbf{Get} new operating conditions $x^0$
 	\State \textbf{Compute} matrices $\mathcal{A}(x^0)$, $G(x^0)$, and find cluster
 	\State  \quad sets $\mathcal{I}_k$ according to Algorithm 1
 	\For {every time-step of the control process}
 			\For {all measurements $y_i, i \in \mathcal{I}_k$}
 				\If {condition in \eqref{eq:sim_check} is satisfied}
 					\State $\rightarrow$ \textbf{FDI attack detected} 
 				\EndIf
 			\EndFor
 	\EndFor 
 	\Until {new ED interval}
 \end{algorithmic}
\end{algorithm}

Finally, we give a note on the computation effort involved in implementing the detection algorithm described above. As we explained, the clustering procedure needs to be repeated relatively infrequently, usually every 1 hour. Even so, in the large-scale setting, solving Lyapunov equations may be considered time consuming. However, the inherent sparsity of power systems, and a number of existing efficient iterative calculation methods for solving the Lyapunov equation, enable the clustering to be performed in a timely manner. The verification process of system measurements only consists of simple mathematical and logical operations, and is not computationally demanding. Therefore, our proposed FDI attack detection method can be implemented in real-time system operation.
\section{Numerical examples}
\label{sec:simulations}
We begin this section by providing necessary power system component models, and deriving the standard state space model of the interconnected power system in form given in \eqref{eq:cl_lin_sys}. Then, we demonstrate the performance of our method on the IEEE RTS 24-bus system.
\subsection{Power system modeling}
\label{sec:modeling}
In this section, we introduce the power system component models used to derive the system matrices in \eqref{eq:cl_lin_sys}. We model the loads as dynamic using the structure-preserving load model \cite{mi2010}, \cite{dh1981}, alongside the well-known generator model with governor control \cite{IlicZab}, for two main reasons. First, so that the sparsity of the power system topology is preserved. Second, and even more importantly, to account for cyber-attacks on loads such as smart meters, electric vehicles and other smart appliances \cite{RJCCPS}, that are exposed through their Internet connectivity. For these reasons, modeling loads when studying the impact of cyber-attacks on power systems becomes highly necessary.
\par Here we consider a power system with $n_G$ generators and $n_L$ loads, and denote the set of generator buses by $\mathcal{G}$, and the set of load buses by ${\mathcal{L}}$. The mechanical dynamics of generators and aggregate loads at the substation level are given by:
\begin{equation}
\begin{aligned}
\label{eq:mech_dyn}
J_i\dot{\omega}_i+D_i\omega_i&=P_{T,i}-P_i+e_{T,i}a_i,&& i\in\mathcal{G}\\ 
J_i\dot{\omega}_i+D_i\omega_i&=-P_i-L_i,&& i\in\mathcal{L}
\end{aligned}
\end{equation}
For each bus $i$, state variable $\omega_i$ denotes its frequency, $P_i$ the net real power injected into the network, and parameters $J_i$, $D_i$ the inertia and damping. At load buses $i \in \mathcal{L}$, $L_i$ is defined as actual mechanical power consumed by the load. At generator buses $i \in \mathcal{G}$, there are additional controller dynamics, namely, governor dynamics. States $P_{T,i}$ and $a_i$ denote the mechanical power of the generator and the turbine valve position respectively, and $e_{T,i}$ is a parameter of the turbine. The governor dynamics are given by:
\begin{equation}
\begin{aligned}
T_{u,i}\dot{P}_{T,i}&=-P_{T,i}+K_{t,i}a_i,&&i\in\mathcal{G} \\
T_{g,i}\dot{a}_i&=-r_ia_i-(\omega_i-\omega^{ref}),&&i\in\mathcal{G}
\end{aligned}
\label{eq:gov_dyn}
\end{equation}
The governor's and turbine's time constants are denoted by $T_{u,i}$ and $T_{g,i}$ while $K_{t,i}$ and $r_i$ are control gains. Finally, $\omega^{ref}$ is the frequency reference provided by the higher control layer. In order to derive the interconnected system, we treat $P_i$ as a coupling state variable whose dynamics can be obtained by differentiating the DC power flow equation, and expressed in matrix form as:
\begin{equation}
\begin{bmatrix} \dot{P}_G \\ -\dot{P}_L\end{bmatrix}=Y_{bus}\mathbf{\omega}
\label{eq:net_dyn}
\end{equation} 
where $P_G:=[P_i]_{i\in\mathcal{{G}}}$ and $P_L:=[P_i]_{i\in\mathcal{{L}}}$, and $Y_{bus}$ is the admittance matrix of a lossless transmission network which can be partitioned as:
\begin{align*}
Y_{bus}=\begin{bmatrix}Y_{GG} & Y_{GL}\\
Y_{LG} & Y_{LL}
\end{bmatrix} 
\end{align*}
Further, $\omega$ can also be partitioned as $\omega:=\begin{bmatrix}\omega_G & \omega_L \end{bmatrix}^T$ where ${\omega_G:=[\omega_i]_{i\in\mathcal{{G}}}}$ and ${\omega_L:=[\omega_i]_{i\in\mathcal{{L}}}}$. Finally, the linearized closed-loop power system model described by equations \eqref{eq:mech_dyn}, \eqref{eq:gov_dyn} and \eqref{eq:net_dyn} can be expressed in form \eqref{eq:cl_lin_sys}, with the state vector $x:=[\omega_G, \omega_L, P_G, P_L, P_T, a]^T$, and ${d=[L_i]_{i\in\mathcal{L}}}$.
The system matrices are given by:
\setlength{\arraycolsep}{2pt}
\begin{align*}
G&:=\begin{bmatrix}
\mathbf{0}_{n_G\times n_L} \\
-\mathbf{J}_L^{-1} \\
\mathbf{0}_{(3n_G+n_L)\times n_L}
\end{bmatrix}
\end{align*}
\begin{align*}
\mathcal{A}&:=\begin{bmatrix}
-\mathbf{J}_G^{-1}\mathbf{D}_G & \mathbf{0} & -\mathbf{J}_G^{-1} & \mathbf{0} & \mathbf{J}_G^{-1} & \mathbf{J}_G^{-1}\mathbf{e}_T \\
\mathbf{0} & -\mathbf{J}_L^{-1}\mathbf{D}_L & \mathbf{0} & \mathbf{J}_L^{-1} & \mathbf{0} & \mathbf{0} \\
\textbf{Y}_{GG} & \textbf{Y}_{GL} & \mathbf{0} & \mathbf{0} & \mathbf{0} & \mathbf{0}\\
-\textbf{Y}_{LG} & -\textbf{Y}_{LL} & \mathbf{0} & \mathbf{0} & \mathbf{0} & \mathbf{0}\\
\mathbf{0} & \mathbf{0} & \mathbf{0} & \mathbf{0} & -\mathbf{T}_u^{-1} & \mathbf{T}_u^{-1}\mathbf{K}_t\\
-\textbf{T}_g^{-1} & \mathbf{0} & \mathbf{0} & \mathbf{0} & \mathbf{0} & -\textbf{T}_g^{-1}\textbf{r}
\end{bmatrix}
\end{align*}
where:
\begin{align*}
	&\mathbf{J}_G:=\diag(\{J_i\}_{i\in\mathcal{{G}}}) & &\mathbf{K}_t:=\diag(\{K_{t,i}\}_{i\in\mathcal{{G}}})\\
	&\mathbf{J}_L:=\diag(\{J_i\}_{i\in\mathcal{{L}}}) & &\mathbf{T}_g:=\diag(\{T_{g,i}\}_{i\in\mathcal{{G}}})\\
	&\mathbf{D}_G:=\diag(\{D_i\}_{i\in\mathcal{{G}}}) & &\mathbf{T}_u:=\diag(\{T_{u,i}\}_{i\in\mathcal{{G}}})\\
	&\mathbf{D}_L:=\diag(\{D_i\}_{i\in\mathcal{{L}}}) & &\mathbf{r}:=\diag(\{r_{i}\}_{i\in\mathcal{{G}}})\\
	&\mathbf{e}_T:=\diag(\{e_{T,i}\}_{i\in\mathcal{{G}}}) 
\end{align*}
In the following sections, we present numerical simulation examples performed on the IEEE RTS 24-bus power system to illustrate the performance of our proposed cyber-attack detection method. 
\subsection{Test system and illustrative scenarios}
The IEEE RTS 24-bus system \cite{IEEE24bus} consists of 10 generators, equipped with governor control, and 14 loads. The interconnected system is modeled using equations \eqref{eq:mech_dyn}-\eqref{eq:net_dyn}, where the dimension of $x$ is 68. For this system, we will first consider two scenarios with different loading conditions, to demonstrate the detection algorithm as well as the moving-target defense strategy:
\begin{itemize}
\item Scenario 1 - the system is at high loading condition. From $t=0$ to $20$ s, the loading is nominal. At time $t=20$ s, load at bus 3 increases by $0.5$ p.u., and at time $t=200$ loading returns to nominal value.\\
\item Scenario 2 - the system is at low loading condition.  From $t=0$ to $20$ s, the loading is nominal. At time $t=20$ s, load at bus 3 increases by $0.5$ p.u., and at time $t=200$ loading returns to nominal value.
\end{itemize}
We use these two scenarios to demonstrate the clustering method introduced in Section~\ref{sec:MOR}, and how cluster boundaries change with the operating conditions. This is demonstrated in Table~\ref{tab:moving_target}, where the net real power injection of generator 8, $P_{G_8}$, and its mechanical power output $P_{T_8}$ are clustered with respective states of generators~3 and 10 under Scenario~1, and generators~4 and 5 under Scenario~2.
\begin{table}[h]
	\begin{center}
		\large
		\begin{tabular}{ r |l }
			 & Clustered states \\
			\hline
			Scenario 1 & $P_{G_3}$,\fbox{$P_{G_8}$}, $P_{G_{10}}$, $P_{T_3}$, \fbox{$P_{T_8}$}, $P_{T_{10}}$  \\ 
			Scenario 2 & $P_{G_4}$,  $P_{G_5}$, \fbox{${P}_{G_8}$}, $P_{T_4}$, $P_{T_5}$, \fbox{${P}_{T_8}$} \\ 
		\end{tabular}
	\end{center}
	\caption{$P_{G_8}$ and $P_{T_8}$ belong to different clusters as operating conditions change}
	\label{tab:moving_target}
\end{table}
Additionally, for the same $\theta=5e^{-3}$, the clustering procedure resulted in 21 clusters under Scenario~1, and 23 under Scenario~2. 

In Figure~\ref{fig:cluster7} we show the dynamic response of one of the clusters under Scenario~1. In this particular case we can also see the effect of the coarseness parameter $\theta$: for higher detection accuracy, we can decrease $\theta$ in which case the cluster in Figure~\ref{fig:cluster7} would split into two. Appropriate attack analysis and parameter tuning is necessary in general case, but we use reasonable values in a common attack scenario to demonstrate our method. 
\begin{figure}[h]
	\begin{center}
		\includegraphics[scale=0.5]{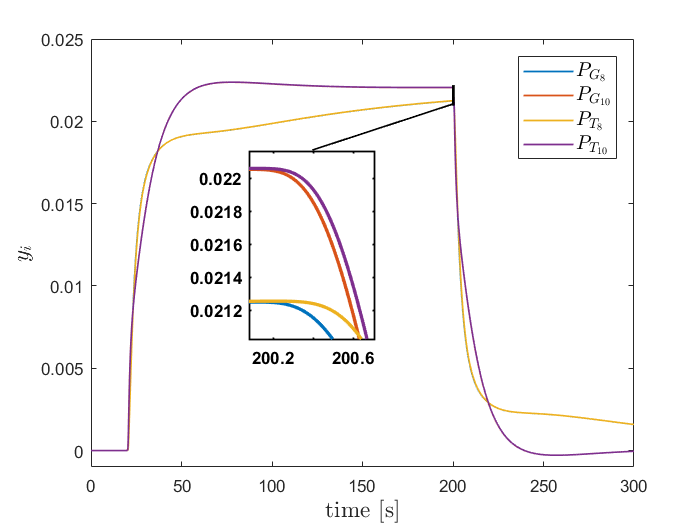}
		\caption{Dynamic response of measurements in one of the clusters under Scenario 1}
		\label{fig:cluster7}
	\end{center}
\end{figure}

We note that the nature of power systems highly influences the clustering procedure. In other words, measurements of states tend to group according to type:  generator net real power injection $P_{G,i}$ and mechanical power output $P_{T,i}$ measurements tend to be in the same group, while load net real power injections $P_{L,i}$ group together, and so do the frequencies $\omega_i$. Also, component and network parameters have an influence on cluster boundaries. Two generators with the same inertia, damping and controller gains will naturally have very similar dynamic response. Also, frequencies of components connected by a line with a large susceptance (i.e. small impedance), tend to have similar magnitude of oscillation.
\subsection{Implementation of the attack detection methodology and result analysis}
We consider Scenario~3 in the remainder of this section to demonstrate attack-detection capabilities of our proposed method:
\begin{itemize}
	\item Scenario 3: the system is at high loading condition. From $t=0$ to $20$ s, the loading is nominal. At time $t=20$ s, load at bus 3 increases by $0.5$ p.u., and at time $t=200$ loading returns to nominal value; at time $t=125$ s, a sequence of 6 scaling attacks are launched on the measurement $P_{G,8}$, each lasting 5 seconds, with total duration of the attack $T_a=55$ s
\end{itemize} 
where a scaling attack can be represented in terms of system in \eqref{eq:outputs} as $\tilde{y}=y+y_a$, where $y_a=k\cdot y$. We use scaling coefficient $k=0.1$, which corresponds to a 10 \% increase in value of $P_{G,8}$ at the time of the attack.
 
We use Scenario 3 to demonstrate the performance of our detection filter, both in presence and absence of measurement noise. In Figure~\ref{fig:attack} we consider the noiseless scenario, and compare the attacked measurement $P_{G,8}$ (middle plot) with other measurements belonging to the same cluster (top plot), to obtain the residual in the bottom plot, which only crosses the chosen detection threshold for each of the attacks. In Figure~\ref{fig:attack_noise} we consider the same attack scenario in presence of noise. Note that false positive alarms become very likely in this case, using the appropriate threshold designed for the deterministic scenario. If the noise parameters are known, additional statistical methods (e.g. hypothesis testing, etc.) may be employed to distinguish between noise and signal. 
In Figures~\ref{fig:noattack} (no measurement noise) and \ref{fig:noattack_noise} (with measurement noise) we show cluster measurements (top) and residuals (bottom) in absence of cyber-attacks. In noiseless scenario, the residual does not cross the detection threshold even when system conditions change, i.e. when there is a load disturbance in the system. In presence of noise, false positives are possible, and additional statistical methods can be employed to improve the performance of the attack detection filter.

\begin{figure*}[t]
	\centering
	\begin{subfigure}[t]{0.49\textwidth}
		\includegraphics[scale=0.39]{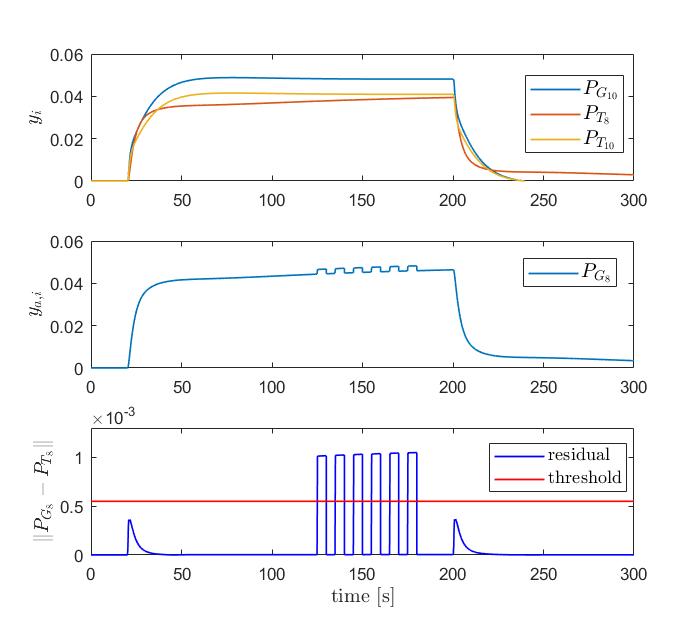}
		\caption{Top: Other cluster measurements; Middle: Measurement of $P_{G,8}$ under a scaling attack; Bottom: detection residual (blue) and threshold (red)}
		\label{fig:attack}
	\end{subfigure}~
	\begin{subfigure}[t]{0.49\textwidth}
		\includegraphics[scale=0.39]{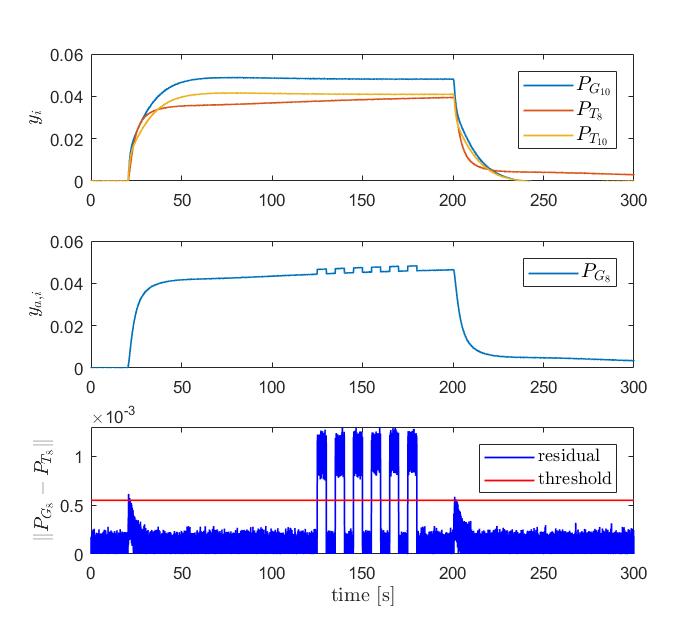}
		\caption{Top: Other cluster measurements (noisy); Middle: Measurement of $P_{G,8}$ (noisy) under a scaling attack; Bottom: detection residual (noisy) and threshold (red)}
		\label{fig:attack_noise}
	\end{subfigure}

	\begin{subfigure}[t]{0.49\textwidth}
		\includegraphics[scale=0.39]{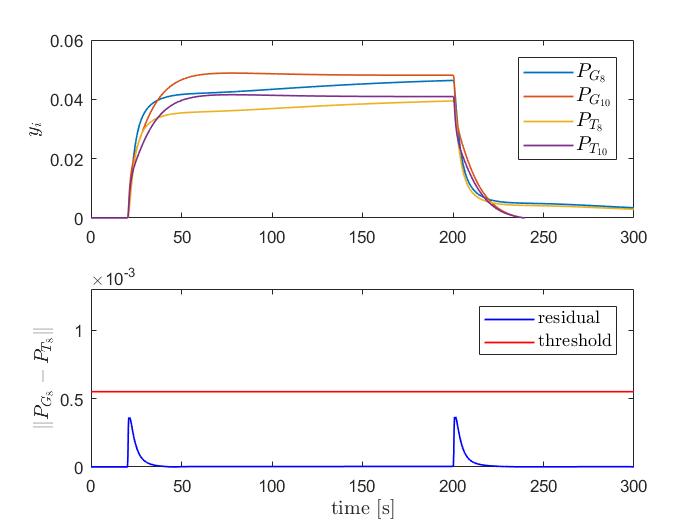}
		\caption{Top: Measurements in one of the clusters under Scenario 3, in absence of cyber-attacks; Bottom: residual in absence of cyber-attacks}
		\label{fig:noattack}
	\end{subfigure}~
	\begin{subfigure}[t]{0.49\textwidth}
		\includegraphics[scale=0.39]{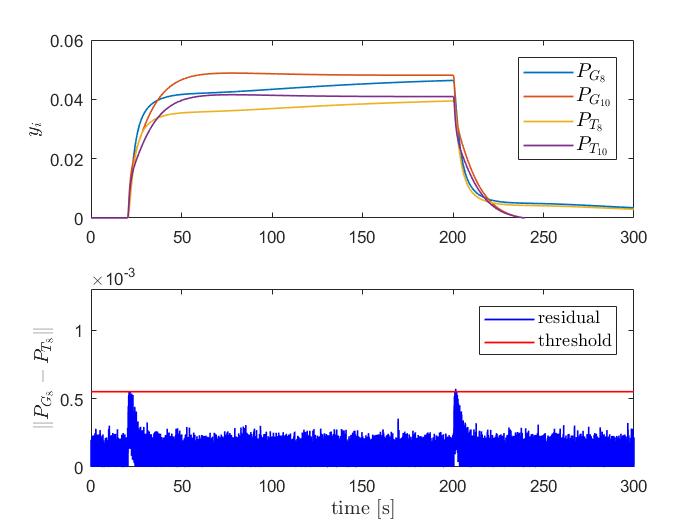}
		\caption{Top: Noisy measurements in one of the clusters under Scenario 3, in absence of cyber-attacks; Bottom: noisy residual in absence of cyber-attacks}
		\label{fig:noattack_noise}
	\end{subfigure}
\caption{Group of measurements belonging to one of the clusters under Scenario 3. Figures 3a and 3b show cases where one of the measurements in the cluster is under scaling attack, namely $P_{G_8}$, in noiseless and noisy setting, respectively. Figures 3c and 3d show measurements from the same cluster in absence of cyber-attacks, in noiseless and noisy setting,respectively.}
\end{figure*}

\section{Conclusion}
\label{sec:conclusion}
In this paper, we presented a moving-target defense algorithm that is based on dynamic clustering to detect false data injection attacks on power system measurements. Our strategy is dynamic and uses information about the system's changing operating point to define clusters of measurements that have similar dynamic response and then carries out detection through similarity checks on the measurements belonging to the same cluster. We numerically showed the performance of our proposed detection algorithm and its ability to successfully detect FDI attacks through an example on the IEEE 24-bus power system.

\bibliographystyle{unsrt}
\bibliography{Agg_journal_ver1}

\begin{thebibliography}{10}

\bibitem{Ukraine2015}
Defense~Use Case.
\newblock Analysis of the cyber attack on the ukrainian power grid.
\newblock 2016.

\bibitem{westUS}
Blake Sobczak.
\newblock '{C}yber event' disrupted {U.S.} grid networks - {DOE}.
\newblock {\em E\& E News}, April 30, 2019.

\bibitem{RJCCPS}
Richard~J. Campbell.
\newblock Cybersecurity issues for the bulk power system.
\newblock Technical report, Congressional Research Service Report, 2015.

\bibitem{mo2012cyber}
Yilin Mo, Tiffany Hyun-Jin Kim, Kenneth Brancik, Dona Dickinson, Heejo Lee,
  Adrian Perrig, and Bruno Sinopoli.
\newblock Cyber--physical security of a smart grid infrastructure.
\newblock {\em Proceedings of the IEEE}, 100(1):195--209, 2012.

\bibitem{khurana2010smart}
Himanshu Khurana, Mark Hadley, Ning Lu, and Deborah~A Frincke.
\newblock Smart-grid security issues.
\newblock {\em IEEE Security \& Privacy}, 8(1):81--85, 2010.

\bibitem{KosutStatic2010}
O.~{Kosut}, {Liyan Jia}, R.~J. {Thomas}, and {Lang Tong}.
\newblock Limiting false data attacks on power system state estimation.
\newblock In {\em 2010 44th Annual Conference on Information Sciences and
  Systems (CISS)}, pages 1--6, 2010.

\bibitem{Liu2009stat}
Yao Liu, Peng Ning, and Michael~K. Reiter.
\newblock False data injection attacks against state estimation in electric
  power grids.
\newblock In {\em Proceedings of the 16th ACM Conference on Computer and
  Communications Security}, pages 21--32, New York, NY, USA, 2009. ACM.

\bibitem{pasqualetti2013attack}
Fabio Pasqualetti, Florian D{\"o}rfler, and Francesco Bullo.
\newblock Attack detection and identification in cyber-physical systems.
\newblock {\em IEEE transactions on automatic control}, 58(11):2715--2729,
  2013.

\bibitem{XieMoSin}
Le~Xie, Yilin Mo, and Bruno Sinopoli.
\newblock Integrity data attacks in power market operations.
\newblock {\em IEEE Transactions on Smart Grid}, 2(4), 2011.

\bibitem{mo2009secure}
Yilin Mo and Bruno Sinopoli.
\newblock Secure control against replay attacks.
\newblock In {\em 2009 47th Annual Allerton Conference on Communication,
  Control, and Computing (Allerton)}, pages 911--918. IEEE, 2009.

\bibitem{teixeira2012attack}
Andr{\'e} Teixeira, Daniel P{\'e}rez, Henrik Sandberg, and Karl~Henrik
  Johansson.
\newblock Attack models and scenarios for networked control systems.
\newblock In {\em Proceedings of the 1st international conference on High
  Confidence Networked Systems}, pages 55--64. ACM, 2012.

\bibitem{sridhar2014model}
Siddharth Sridhar and Manimaran Govindarasu.
\newblock Model-based attack detection and mitigation for automatic generation
  control.
\newblock {\em IEEE Transactions on Smart Grid}, 5(2):580--591, 2014.

\bibitem{Tabuada2016}
Y.~{Shoukry} and P.~{Tabuada}.
\newblock Event-triggered state observers for sparse sensor noise/attacks.
\newblock {\em IEEE Transactions on Automatic Control}, pages 2079--2091, 2016.

\bibitem{Fawzi2014}
H.~{Fawzi}, P.~{Tabuada}, and S.~{Diggavi}.
\newblock Secure estimation and control for cyber-physical systems under
  adversarial attacks.
\newblock {\em IEEE Transactions on Automatic Control}, pages 1454--1467, 2014.

\bibitem{huang2018online}
Tong Huang, Bharadwaj Satchidanandan, PR~Kumar, and Le~Xie.
\newblock An online detection framework for cyber attacks on automatic
  generation control.
\newblock {\em IEEE Transactions on Power Systems}, 33(6):6816--6827, 2018.

\bibitem{satchidanandan2017dynamic}
Bharadwaj Satchidanandan and Panganamala~R Kumar.
\newblock Dynamic watermarking: Active defense of networked cyber--physical
  systems.
\newblock {\em Proceedings of the IEEE}, 105(2):219--240, 2017.

\bibitem{teixeira2012revealing}
Andr{\'e} Teixeira, Iman Shames, Henrik Sandberg, and Karl~H Johansson.
\newblock Revealing stealthy attacks in control systems.
\newblock In {\em 2012 50th Annual Allerton Conference on Communication,
  Control, and Computing (Allerton)}, pages 1806--1813. IEEE, 2012.

\bibitem{miao2014coding}
Fei Miao, Quanyan Zhu, Miroslav Pajic, and George~J Pappas.
\newblock Coding sensor outputs for injection attacks detection.
\newblock In {\em 53rd IEEE Conference on Decision and Control}, pages
  5776--5781. IEEE, 2014.

\bibitem{rhouma2016coding}
Taouba Rhouma, Jean-Yves Keller, Karim Chabir, Dominique Sauter, and
  Mohamed~Naceur Abdelkrim.
\newblock Coding control signals and switching lqg controller for secure
  fault-tolerant control against stealthy false data injection.
\newblock In {\em 2016 3rd Conference on Control and Fault-Tolerant Systems
  (SysTol)}, pages 750--755. IEEE, 2016.

\bibitem{yuan2015security}
Ye~Yuan and Yilin Mo.
\newblock Security in cyber-physical systems: Controller design against
  known-plaintext attack.
\newblock In {\em 2015 54th IEEE Conference on Decision and Control (CDC)},
  pages 5814--5819. IEEE, 2015.

\bibitem{ChenLinSysBook}
Chi-Tsong Chen.
\newblock {\em Linear System Theory and Design}.
\newblock Oxford University Press, Inc., New York, NY, USA, 3rd edition, 1998.

\bibitem{ISHIZAKIpos_dir}
Takayuki Ishizaki, Kenji Kashima, Antoine Girard, Jun ichi Imura, Luonan Chen,
  and Kazuyuki Aihara.
\newblock Clustered model reduction of positive directed networks.
\newblock {\em Automatica}, 59:238 -- 247, 2015.

\bibitem{Ishizaki20115019}
Takayuki Ishizaki, Kenji Kashima, Jun ichi Imura, and Kazuyuki Aihara.
\newblock Model order reduction for mimo linear dynamical networks via
  reaction-diffusion transformation.
\newblock pages 5019--5024, 2011.

\bibitem{mi2010}
Marija~D. Ilic, Le~Xie, Usman~A. Khan, and José M.~F. Moura.
\newblock Modeling of future cyber-physical energy systems for distributed
  sensing and control.
\newblock {\em IEEE Transactions on Systems, Man and Cybernetics: Part A},
  2010.

\bibitem{dh1981}
Arthur~R. Bergen and David~J. Hill.
\newblock Structure preserving model for power system stability analysis.
\newblock {\em IEEE Trans. Power Appar. Syst.}, 1981.

\bibitem{IlicZab}
Marija~D. Ilic and John Zaborszky.
\newblock {\em Dynamics and Control of Large Electric Power Systems}.
\newblock John Wiley\& Sons, Inc., 2000.

\bibitem{IEEE24bus}
Christos Ordoudis, Pierre Pinson, Juan Miguel~Morales Gonz{\'a}lez, and Marco
  Zugno.
\newblock An updated version of the ieee rts 24-bus system for electricity
  market and power system operation studies.
\newblock 2016.

\end{thebibliography}
\end{document}